\newif\ifarxiv
\begin{document}

\title*{A BCMP Network Approach\\ to Modeling and Controlling\\ Autonomous Mobility-on-Demand Systems}
\titlerunning{A BCMP Network Approach to Modeling and Controlling AMoD}
\author{Ramon Iglesias, Federico Rossi, Rick Zhang, and Marco Pavone}
\institute{Ramon Iglesias $\cdot$ Federico Rossi $\cdot$  Rick Zhang $\cdot$ Marco Pavone \at Autonomous Systems Laboratory, Stanford University, 496 Lomita Mall, Stanford, CA 94305, \\ \email{{rdit,frossi2,rickz,pavone}@stanford.edu}}
%
%
\maketitle

\abstract*{In this paper we present a queuing network approach to the problem of routing and rebalancing a fleet of self-driving vehicles providing on-demand mobility within a {\em capacitated} road network. We refer to such systems as autonomous mobility-on-demand systems, or AMoD. We first cast an AMoD system into a closed, multi-class BCMP queuing network model. Second, we present analysis tools that allow the characterization of performance metrics for a given routing policy, in terms, e.g., of vehicle availabilities, and first and second order moments of vehicle throughput. Third, we propose a scalable method for the synthesis of routing policies, with performance guarantees in the limit of large fleet sizes. Finally, we validate the theoretical results on a case study of New York City. Collectively, this paper provides a unifying framework for the analysis and control of AMoD systems, which subsumes earlier Jackson and network flow models, provides a quite large set of modeling options (e.g., the inclusion of road capacities and general travel time distributions), and allows the analysis of second and higher-order moments for the performance metrics.}

\abstract{In this paper we present a queuing network approach to the problem of routing and rebalancing a fleet of self-driving vehicles providing on-demand mobility within a {\em capacitated} road network. We refer to such systems as autonomous mobility-on-demand systems, or AMoD. We first cast an AMoD system into a closed, multi-class BCMP queuing network model. Second, we present analysis tools that allow the characterization of performance metrics for a given routing policy, in terms, e.g., of vehicle availabilities, and first and second order moments of vehicle throughput. Third, we propose a scalable method for the synthesis of routing policies, with performance guarantees in the limit of large fleet sizes. Finally, we validate the theoretical results on a case study of New York City. Collectively, this paper provides a unifying framework for the analysis and control of AMoD systems, which subsumes earlier Jackson and network flow models, provides a quite large set of modeling options (e.g., the inclusion of road capacities and general travel time distributions), and allows the analysis of second and higher-order moments for the performance metrics.}

\vspace{-3mm}
\section{Introduction}
\vspace{-3mm}
	\label{sec:intro}
	Personal mobility in the form of privately owned automobiles contributes to increasing levels of traffic congestion, pollution, and under-utilization of vehicles (on average 5\% in the US \cite{DN:15}) -- clearly unsustainable trends for the future. The pressing need to reverse these trends has spurred the creation of cost competitive, on-demand personal mobility solutions such as car-sharing (e.g. Car2Go, ZipCar) and ride-sharing (e.g. Uber, Lyft).
	However, without proper fleet management, car-sharing and, to some extent, ride-sharing systems lead to vehicle imbalances: vehicles aggregate in some areas while becoming depleted in others, due to the asymmetry between trip origins and destinations \cite{RZ-MP:15_MODa}. This issue has been addressed in a variety of ways in the literature. For example, in the context of bike-sharing, \cite{DC-FM-RWC:13} proposes rearranging the stock of bicycles between stations using trucks. The works in \cite{MN-SZ-SB-MJR:15}, \cite{BB-KGZ-NG:15}, and \cite{FA-DDP-AR:14} investigate using paid drivers to move vehicles between car-sharing stations where cars are parked, while \cite{SB-RJ-CR:15} studies the merits of dynamic pricing for incentivizing drivers to move to underserved areas. 
	
	Self-driving vehicles offer the distinctive advantage of being able to rebalance themselves, in addition to the convenience, cost savings, and possibly safety of not requiring a driver. Indeed, it has been shown that one-way vehicle sharing systems with self-driving vehicles (referred to as autonomous mobility-on-demand systems, or AMoD) have the potential to significantly reduce passenger cost-per-mile-traveled, while keeping the advantages and convenience of personal mobility \cite{KS-KT-RZ-EF-DM-MP:14}. Accordingly, a number of works have recently investigated the potential of AMoD systems, with a specific focus on the synthesis and analysis of coordination algorithms. Within this context, the goal of this paper is to provide a principled framework for the analysis and synthesis of routing policies for AMoD systems.

	{\em Literature Review}: Previous work on AMoD systems can be categorized into two main classes: heuristic methods and analytical methods. Heuristic routing strategies are extensively investigated in \cite{DJF-KMK:14,DJF-KMK-PB:15,MWL-TL-SDB-KMK:16} by leveraging a traffic simulator and, in \cite{RZ-FR-MP:16}, by leveraging a model predictive control framework. Analytical models of AMoD systems are proposed in \cite{MP-SLS-EF-DR:12}, \cite{RZ-MP:15_MODa}, and \cite{RZ-FR-MP:16a}, by using fluidic, Jackson queuing network, and capacitated flow frameworks, respectively. Analytical methods have the advantage of providing structural insights (e.g., \cite{RZ-FR-MP:16a}), and provide guidelines for the synthesis of control policies. The problem of controlling AMoD systems is similar to the System Optimal Dynamic Traffic Assignment (SO-DTA) problem (see, e.g., \cite{YC-JB-MM-AP-RB-TW-JH:11,MP:15a}) where the objective is to find optimal routes for all vehicles within congested or capacitated networks such that the total cost is minimized. The main differences between the AMoD control problem and the SO-DTA problem is that SO-DTA only optimizes customer routes, and {\em not} rebalancing routes.
	
	This paper aims at devising a general, unifying analytical framework for analysis and control of AMoD systems, which subsumes many of the analytical models recently presented in the literature, chiefly, \cite{MP-SLS-EF-DR:12}, \cite{RZ-MP:15_MODa}, and \cite{RZ-FR-MP:16a}. Specifically, this paper extends our earlier Jackson network approach in \cite{RZ-MP:15_MODa} by adopting a BCMP queuing-theoretical framework \cite{FB-KMC-RRM-FGP:75,HK-MG:83}. BCMP networks significantly extend Jackson networks by allowing almost arbitrary customer routing and service time distributions, while still admitting a convenient product-form distribution solution for the equilibrium distribution \cite{HK-MG:83}. Such generality allows one to take into account several real-world constraints, in particular road capacities (that is, congestion). Indeed, the impact of AMoD systems on congestion has been a hot topic of debate. For example, \cite{MWL-TL-SDB-KMK:16} notes that empty-traveling rebalancing vehicles may increase congestion and total in-vehicle travel time for customers, but \cite{RZ-FR-MP:16a} shows that, with congestion-aware routing and rebalancing, the increase in congestion can be avoided. The proposed BCMP model recovers the results in \cite{RZ-FR-MP:16a}, with the additional benefits of taking into account the stochasticity of transportation networks and providing estimates for performance metrics.
	
	
	{\em Statement of Contributions}: The contribution of this paper is fourfold. First, we show how an AMoD system can be cast within the framework of closed, multi-class BCMP queuing networks. The framework captures stochastic passenger arrivals, vehicle routing on a road network, and congestion effects. Second, we present analysis tools that allow the characterization of performance metrics for a given routing policy, in terms, e.g., of vehicle availabilities and second-order moments of vehicle throughput. Third, we propose a scalable method for the synthesis of routing policies: namely, we show that, for large fleet sizes, the stochastic optimal routing strategy can be found by solving a linear program. Finally, we validate the theoretical results on a case study of New York City.

	{\em Organization}: The rest of the paper is organized as follows. In Section \ref{background_material}, we cover the basic properties of BCMP networks and, in Section \ref{model_description}, we describe the AMoD model, cast it into a BCMP network, and formally present the routing and rebalancing problem. Section \ref{proposed_solution} presents the mathematical foundations and assumptions required to reach our proposed solution. We validate our approach in Section \ref{numerical_experiments} using a model of Manhattan. Finally, in Section \ref{conclusions}, we state our concluding remarks and discuss potential avenues for future research.

	\vspace{-5mm}
	\section{Background Material}\label{background_material}
	\vspace{-3mm}
	In this section we review some basic definitions and properties of BCMP networks, on which we will rely extensively later in the paper.
	\vspace{-5mm}
	\subsection{Closed, Multi-Class BCMP Networks}
	\vspace{-3mm}
	\label{sec:BCMPnets}
	Let $\mathcal{Z}$ be a network consisting of $N$ independent queues (or nodes). 
	A set of agents move within the network according to a stochastic process, i.e. after receiving service at queue $i$ they proceed to queue $j$ with a given probability. No agent enters or leaves the network from the outside, so the number of agents is fixed and equal to $m$. Such a network is also referred to as a \textit{closed} queuing network. Agents belong to one of $K \in \mathbb{N}_{>0}$ classes, and they can switch between classes upon leaving a node. 
	
	Let $x_{i, k}$ denote the number of agents of class $k \in \{1,\ldots, K\}$ at node $i \in \{1,\ldots, N\}$. The state of node $i$, denoted by $\bm{x}_i$, is given by $\bm{x}_i = (x_{i,1}, ... , x_{i, K}) \in \mathbb{N}^K$. The state space of the network is 
	\cite{EG-GP-JCCN:98}: 
	\[
	\small
	\Omega_m := \{ (\bm{x}_1, ... , \bm{x}_N) : \bm{x}_i \in \mathbb{N}^K, \quad \sum_{i = 1}^N \|\bm{x}_i\|_1 = m \},
	\]
	where $\|\cdot\|_1$ denotes the standard 1-norm (i.e., $\|\bm{x}\|_1 = \sum_{i} |x_i|$). The relative frequency of visits (also known as relative throughput) to node $i$ by agents of class $k$, denoted as $\pi_{i,k}$, is given by the traffic equations \cite{EG-GP-JCCN:98}:
	\begin{equation} \label{eq:throughput}
	\small
	\pi_{i,k} = \sum_{k'= 1}^{K} \sum_{j = 1}^{N} \pi_{j,k'} p_{j,k';i,k}, \quad \text{ for all } i \in \{1,\ldots, N\},
	\end{equation}
	where $p_{j,k'; \, i,k}$ is the probability that upon leaving node $j$, an agent of class $k'$ goes to node $i$ and becomes an agent of class $k$. Equation \eqref{eq:throughput} does not have a unique solution (a typical feature of closed networks), and $\pi = \{\pi_{i,k}\}_{i,k}$ only determines frequencies up to a constant factor (hence the name ``relative" frequency). It is customary to express frequencies in terms of a chosen reference node, e.g., so that $\pi_{1,1} = 1$. 
	
	Queues are allowed to be one of four types: First Come First Serve (FCFS), Processor Sharing, Infinite Server, and Last Arrived, First Served. FCFS nodes have exponentially distributed service times, while the other three queue types may follow any Cox distribution \cite{EG-GP-JCCN:98}. Such a queuing network model is referred to as a closed, multi-class BCMP queuing network \cite{EG-GP-JCCN:98}.

	Let $\mathcal{N}$ represent the set of nodes in the network and $N$ its cardinality. For the remainder of the paper, we will restrict networks to have only two types of nodes: FCFS queues with a single server (for short, SS queues), forming a set $\mathcal S\subset \mathcal N$, and infinite server queues (for short, IS queues), forming a set $\mathcal I\subset \mathcal N$. Furthermore, we consider class-independent and load-independent nodes, whereby at each node $i\in \{1,\ldots, N\}$ the service rate is given by:
	\[
	\small
	\mu_i(x_i) = c_i(x_i)\mu_i^o,
	\]
	where $x_i :=\|\bm{x}_i\|_1$ is the number of agents at node $i$, $\mu_i^o$ is the (class-independent) base service rate, and $c_i(x_i)$ is the (load-independent) {\em capacity} function
	\[
	\small
	c_i(x_i) = \begin{cases}
	x_i & \text{if } x_i \leq c_i^o,\\
	c_i^0 & \text{if } x _i > c_i^o,\\
	\end{cases}
	\]
	which depends on the number of servers $c_i^o$ at the queue. In the case considered in this paper, $c_i^o = 1$ for all $i \in \mathcal{S}$ and $c_i^o = \infty$ for all $i \in \mathcal{I}$.
	
	
	Under the assumption of class-independent service rates, the multi-class network $\mathcal Z$ can be ``compressed" into a single-class network $\mathcal{Z}^*$ with state-space $\Omega_m^* := \{ (x_1, ... , x_N) : x_i \in \mathbb{N}, \quad \sum_{i = 1}^N x_i = m \}$ \cite{KK-MMS:92}. Performance metrics for the original, multi-class network $\mathcal{Z}$ can be found by first analyzing the compressed network $\mathcal{Z}^*$, and then applying suitable scalings for each class. Specifically, let $\small \pi_i = \sum_{k=1}^K \pi_{i,k}$ and $\gamma_i = \sum_{k=1}^K \frac{\pi_{i,k}}{\mu_i^o}$, be the total relative throughput and relative utilization at a node $i$, respectively. Then, the stationary distribution of the compressed, single-class network $\mathcal{Z}^*$ is given by
	\begin{equation*}
	\small
	\mathbb{P}(x_1, ... , x_N) = \frac{1}{G(m)} \prod_{i =1}^{N} \frac{\gamma_{i}^{x_i}}{\prod_{a=1}^{x_i} c_i(a)} , \quad \text{where} \quad G(m) = \sum_{\Omega_m^*} \prod_{i =1}^{N} \frac{\gamma_{i}^{x_i}}{\prod_{a=1}^{x_i} c_i(a)}
	\end{equation*}
	is a normalizing constant.
	Remarkably, the stationary distribution has a product form, a key feature of BCMP networks.
	
	Three performance metrics that are of interest at each node are throughput, expected queue length, and availability. First, the throughput at a node (i.e., the number of agents processed by a node per unit of time) is given by
	\begin{equation}
	\small
	\label{eq:nodethroughput}
	\Lambda_i(m) = \pi_i \, \frac{G(m-1)}{G(m)}.
	\end{equation}
Second, let $\mathbb{P}_i(x_i;\, m)$ be the probability of finding $x_i$ agents at node $i$; then the expected queue length at node $i$ is given by
	$
	\small
	L_i(m) = \sum_{ x_i = 1}^m x_i \mathbb{P}_i(x_i ; \, m).
	$
	
	In the case of IS nodes (i.e., nodes in $\mathcal I$), the expected queue length can be more easily derived via Little's Law as \cite{DKG:12}
	\begin{equation}\label{eq:littles_law}
	\small
		L_i(m) = \Lambda_i(m) / \mu_i^o \quad \text{ for all } i \in \mathcal{I}.
	\end{equation}	
Finally, the availability of single-server, FCFS nodes (i.e., nodes in $\mathcal S$) is defined as the probability that the node has at least 1 agent, and is given by \cite{DKG:12}
	\begin{equation*}
	\small
	A_i(m) = \gamma_i \frac{G(m-1)}{G(m)} \quad \text{ for all } i \in \mathcal{S}.
	\end{equation*}
	The throughputs and the expected queue lengths for the original, multi-class network $\mathcal{Z}^*$ can be found via scaling \cite{KK-MMS:92}, specifically, $\small \Lambda_{i,k}(m) = ({\pi_{i,k}}/{\pi_i}) \Lambda_i(m)$ and $L_{i,k}(m) = ({\pi_{i,k}}/{\pi_i}) L_i(m).$
	
	It is worth noting that evaluating the three performance metrics above requires computation of the normalization constant $G(m)$, which is computationally expensive. However, several techniques are available to avoid the direct computation of $G(m)$. In particular, in this paper we use the Mean Value Analysis method, which, remarkably, can be also used to compute higher moments (e.g., variance) \cite{JS:86}. Details are provided in 
	\ifarxiv
	the Appendix.
\else	
	the extended version of this paper \cite{RI-FR-RZ-MP:16EV}.
\fi
	\vspace{-5mm}
	\subsection{Asymptotic Behavior of Closed BCMP Networks}\label{section:asymp_section}
	\vspace{-5mm}
	
	In this section we describe the asymptotic behavior of closed BCMP networks as the number of agents $m$ goes to infinity. 
	The results described in this section are taken from \cite{DKG:12}, and are detailed for a single-class network; however, as stated in the previous section, results found for a single-class network can easily be ported to the multi-class equivalent in the case of class-independent service rates.
	
	Let $\rho_i := \gamma_i / c_i^o$ be the utilization factor of node $i \in \mathcal N$, where $c_i^o$ is the number of servers at node $i$. Assume that the relative throughputs $\{\pi_i\}_i$ are normalized so that $\max_{ i \in \mathcal S} \, \rho_i = 1$; furthermore, assume that nodes are ordered by their utilization factors so that $1 = \rho_1 \geq \rho_2 \geq \ldots \geq \rho_N$, and define the set of bottleneck nodes as $\mathcal B := \{ i \in \mathcal S : \rho_i =1 \}$.
	
	It can be shown \cite[p. 14]{DKG:12} that, as the number of agents $m$ in the system approaches infinity, the availability at all bottleneck nodes converges to 1 while the availability at non-bottleneck nodes is strictly less than 1, that is
	\begin{equation}\label{eq:availability_lim}
	\small
	\lim_{m \rightarrow \infty} A_i(m) \begin{cases}
	= 1 \quad \forall i \in \mathcal B. \\
	< 1 \quad \forall i \notin \mathcal B.
	\end{cases}
	\end{equation}
	Additionally, the queue lengths at the non-bottleneck nodes have a limiting distribution given by
	\begin{equation}\label{eq:open_network}
	\small
	\lim_{m \rightarrow \infty} \mathbb{P}_i(x_i ; m) = \begin{cases}
	(1 - \rho_i) \, \rho_i^{x_i} & i \in S, i \notin \mathcal{B}, \\
	e^{-\gamma_i}\frac{\gamma_i^{x_i}}{x_i!} & i \in I.
	\end{cases}
	\end{equation}
	Together, \eqref{eq:availability_lim} and \eqref{eq:open_network} have strong implications for the operation of queuing networks with a large number of agents, and in particular for the operation of AMoD systems. Intuitively, \eqref{eq:availability_lim} shows that as we increase the number of agents in the network, they will be increasingly queued at bottleneck nodes, driving availability in those queues to one. Alternatively, non-bottleneck nodes will converge to an availability strictly less than one, implying that there is always a non-zero probability of having an empty queue. In other words, agents will aggregate at the bottlenecks and become depleted elsewhere. Additionally, \eqref{eq:open_network} shows that, as the number of agents goes to infinity, non-bottleneck nodes tend to behave like queues in an equivalent open BCMP network with the bottleneck nodes removed, i.e., individual performance metrics can be calculated in isolation.

\vspace{-5mm}
\section{Model Description and Problem Formulation}\label{model_description}
\vspace{-5mm}
	In this section, we introduce a BCMP network model for AMoD systems, and formalize the problem of routing and rebalancing such systems under stochastic conditions. Casting an AMoD system as a queuing network allows us to characterize and compute key performance metrics including the distribution of the number of vehicles on each road link (a key metric to characterize traffic congestion) and the probability of servicing a passenger request. To emphasize the relationship with the theory presented in the previous section, we reuse the same notation whenever concepts are equivalent.
	
	\vspace{-5mm}
	\subsection{Autonomous Mobility-on-Demand Model}
	\vspace{-5mm}
	\label{amod_model}
	
	Consider a set of stations\footnote{Stations are not necessarily physical locations: they can also be interpreted as a set of geographical regions.} $\mathcal{S}$ distributed within an urban area connected by a network of individual road links $\mathcal{I}$, and $m$ autonomous vehicles providing one-way transportation between these stations for incoming customers. Customers arrive to a station $s\in\mathcal{S}$ with a target destination $t\in\mathcal{S}$ according to a time-invariant Poisson process with rate $\lambda \in \mathbb{R}_{ > 0 }$. The arrival process for all origin-destination pairs is summarized by the set of tuples $\mathcal{Q} = \{(s^{(q)}, t^{(q)}, \lambda^{(q)})\}_q$. 
	
	If on customer arrival there is an available vehicle, the vehicle drives the customer towards its destination. Alternatively, if there are no vehicles, the customer leaves the system (i.e., chooses an alternative transportation system). Thus, we adopt a \emph{passenger loss} model. Such model is appropriate for systems where high quality-of-service is desired; from a technical standpoint, this modeling assumption decouples the passenger queuing process from the vehicle queuing process.
	
	A vehicle driving a passenger through the road network follows a routing policy $\alpha^{(q)}$ (defined in Section \ref{sec:AMoD2BCMP}) from origin to destination, where $q$ indicates the origin-destination-rate tuple. Once it reaches its destination, the vehicle joins the station first-come, first-serve queue and waits for an incoming trip request.
	
	A known problem of such systems is that vehicles will inevitably accumulate at one or more of the stations and reduce the number of vehicles servicing the rest of the system \cite{DKG:12} if no corrective action is taken. To control this problem, we introduce a set of ``virtual rebalancing demands" or ``virtual passengers" whose objective is to balance the system, i.e., to move empty vehicles to stations experiencing higher passenger loss. Similar to passenger demands, rebalancing demands are defined by a set of origin, destination and arrival rate tuples $\mathcal{R} = \{(s^{(r)}, t^{(r)}, \lambda^{(r)})\}_r$, and a corresponding routing policy $\alpha^{(r)}$. Therefore, the objective is to find a set of routing policies $\alpha^{(q)}, \alpha^{(r)}$, for all $q \in \mathcal{Q}$, $r \in \mathcal{R}$, and rebalancing rates $\lambda^{(r)}$, for all $r \in \mathcal{R}$, that balances the system while minimizing the number of vehicles on the road, and thus reducing the impact of the AMoD system on overall traffic congestion.
	\vspace{-5mm}
	\subsection{Casting an AMoD System into a BCMP Network}
	\vspace{-5mm}
	\label{sec:AMoD2BCMP}
	We are now in a position to frame the AMoD system in terms of a BCMP network model. To this end, we represent the vehicles, the road network and the passenger demands in the BCMP framework.
	
	First, the passenger loss assumption allows the model to be characterized as a queuing network with respect only to the \textit{vehicles}. Thus, we will henceforth use the term ``vehicles" to refer to the queuing agents. From this perspective, the stations $\mathcal{S}$ are equivalent to SS queues, and the road links $\mathcal{I}$ are modeled as IS queues.
	
	Second, we map the underlying road network to a directed graph with the queues as \emph{edges}, and introduce the set of road intersections $\mathcal{V}$ to function as graph vertices. As in Section \ref{background_material}, the set of all queues is given by $\mathcal{N}= \{\mathcal{S} \cup \mathcal{I}\}$. Let Parent($i$) and Child($i$) be the origin and destination vertices of edge $i$. Then, a road that goes from intersection $j$ to intersection $l$ is represented by a queue $i\in\mathcal{I}$ such that Parent($i$)$= j$ and Child($i$)$= l$. Note that the road may not have lanes in the opposite direction, in which case a queue $i'$ with Parent($i'$)$= l$ and Child($i'$)$=j$ would not exist. For example, in Figure \ref{fig:network}, queue 14 starts at vertex 1 and ends at vertex 5. However, there is no queue that connect the vertices in the opposite direction. Similarly, we assume that stations are adjacent to road intersections, and therefore stations are modeled as edges with the same parent and child vertex. An intersection may have access to either one station (e.g., vertex 2 in Figure \ref{fig:network}), or zero stations (e.g., vertex 5 in Figure \ref{fig:network}).
	\begin{figure}[h]
		\sidecaption[t]
		\includegraphics[height=0.22\textheight]{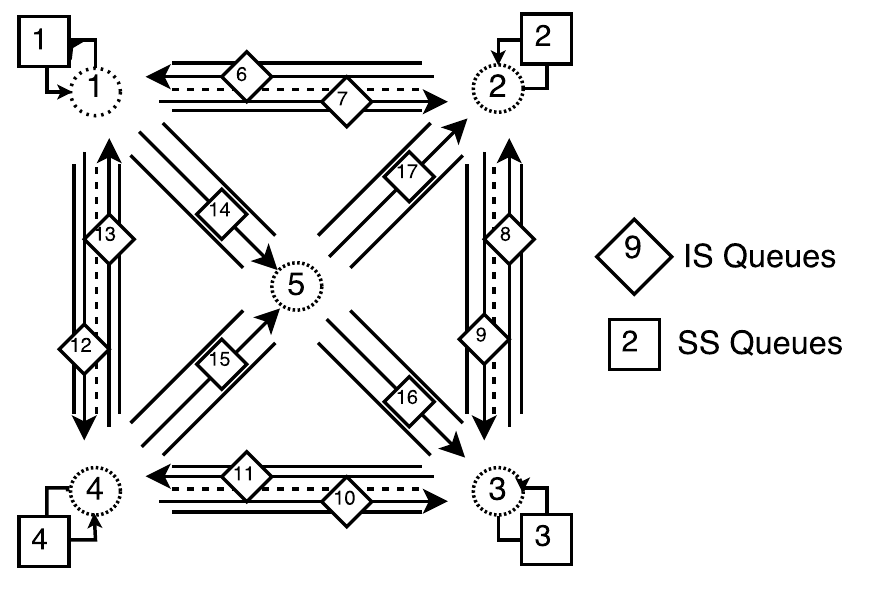}
		\caption{BCMP network model of an AMoD system. Diamonds represent infinite-server road links, squares represent the single-server vehicle stations, and dotted circles represent road intersections.}
		\label{fig:network}
	\end{figure}
	
	Third, we introduce classes to represent the process of choosing destinations. We map the set of tuples $\mathcal{Q}$ and $\mathcal{R}$ defined in Section \ref{amod_model} to a set of classes $\mathcal{K}$ such that $\mathcal{K} = \{\mathcal{Q} \cup \mathcal{R}\}$. Moreover, let $\mathcal{O}_i$ be the subset of classes whose origin $s^{(k)}$ is the station $i$, such that $\mathcal{O}_i := \{k \in \mathcal{K} : s^{(k)}=i\}$ and $\mathcal{D}_i$ be the subset whose destination $t^{(k)}$ is the station $i$, such that $\mathcal{D}_i := \{k \in \mathcal{K} : t^{(k)}=i\}$.
	Thus, the probability that a vehicle at station $i$ will leave for station $j$ with a (real or virtual) passenger is the ratio between the respective (real or virtual) arrival rate $\lambda^{(k)}$, with $s^{(k)}=i$, $t^{(k)}=j$, and the sum of all arrival rates at station $i$. Formally, the probability that a vehicle of class $k$ switches to class $k'$ upon arrival to its destination $t^{(k)}$ is
	$
	\widetilde{p}_{t^{(k)}}^{(k')} = \left(\lambda^{(k')}/\widetilde{\lambda}_{t^{(k)}}\right),
	$
	where $\widetilde{\lambda}_i = \sum_{k \in \mathcal{O}_i} \lambda^{(k)}$ is the sum of all arrival rates at station $t^{(k)}$. Consequently, at any instant in time a vehicle belongs to a class $k \in \mathcal{K}$, regardless of whether it is waiting at a station or traveling along the road network. By switching class on vehicle arrival, the vehicles's transition probabilities $\widetilde{p}_{t^{(k)}}^{(k')}$ encode the passenger and rebalancing demands defined in Section \ref{amod_model}.	
	
	As mentioned in the previous section, the traversal of a vehicle from its source $s^{(k)}$ to its destination $t^{(k)}$ is guided by a routing policy $\alpha^{(k)}$. This routing policy, in queuing terms, consists of a matrix of transition probabilities. Let $\mathcal{W}_i = \{j \in \mathcal{N}: \text{Parent($j$)} = i\}$ be the set of queues that begin in vertex $i$, and 	$\mathcal{U}_i = \{j \in \mathcal{N}: \text{Child($j$)} = i\}$ the set of queues that end in vertex $i$. A vehicle of class $k$ leaves the station $s^{(k)}$ via one of the adjacent roads $j \in \mathcal{W}_{\text{Child}(s^{(k)})}$ with probability $\alpha_{s^{(k)},j}^{(k)}$. It continues traversing the road network via these adjacency relationships following the routing probabilities $\alpha_{i,j}^{(k)}$ until it is adjacent to its goal $t^{(k)}$. At this point, the vehicle proceeds to the destination and changes its class to $k' \in \mathcal{O}_{t^{(k)}}$ with probability $\widetilde{p}_{t^{(k)}}^{(k')}$. This behavior is encapsulated by the routing matrix
	\begin{equation}\label{eq:routing} 
	\small
	p_{i,k;j,k'} = \begin{cases}
	\alpha_{i,j}^{(k)} &\text{if $k = k'$, $j \in \mathcal{W}_{\text{Child}(i)}$, $t^{(k)} \notin \mathcal{W}_{\text{Child}(i)}$},\\
	\widetilde{p}_{j}^{(k')} &\text{if $j = t^{(k)}$, $t^{(k)} \in \mathcal{W}_{\text{Child}(i)}$, $k' \in \mathcal{O}_j$}, \\
	0 &\text{otherwise},
	\end{cases}
	\end{equation}
	such that $\sum_{j,k'} p_{i,k;j,k'} = 1$. Thus, the relative throughput $\pi_{i,k}$, total relative throughput $\pi_i$, and utilization $\gamma_i$ have the same definition as in Section \ref{background_material}.
	
	As stated before, the queuing process at the stations is modeled as a SS queue where the service rate of the vehicles $\mu_i(a)$ is equal to the sum of real and virtual passenger arrival rates, i.e. $\mu_i(a) = \widetilde{\lambda}_i$ for any station $i$ and queue length $a$. Additionally, by modeling road links as IS queues, we assume that their service rates follow a Cox distribution with mean $\small \mu_{i}(a) = \frac{c_{i}(a)}{T_{i}}$, where $T_{i}$ is the expected time required to cross link $i$ in absence of congestion, and $c_{i}(a)$ is the capacity factor when there are $a$ vehicles in the queue. In this paper, we only consider the case of load-independent travel times, therefore $c_{i}(a) = a$ for all $ a$, i.e., the service rate is the same regardless of the number of vehicles on the road. We do not make further assumptions on the distribution of the service times. The assumption of load-independent travel times is representative of uncongested traffic \cite{BPR:64}: in Section \ref{section:problem_formulation} we discuss how to incorporate probabilistic constraints for congestion on road links.
	
	\vspace{-5mm}
	\subsection{Problem formulation}\label{section:problem_formulation}
	\vspace{-5mm}
	
	As stated in Equation (\ref{eq:availability_lim}), vehicles tend to accumulate in bottleneck stations driving their availability to 1 as the fleet size increases, while the rest of the stations have availability strictly smaller than 1. In other words, for unbalanced systems, availability at most stations is capped regardless of fleet size. Therefore, it is desirable to make all stations "bottleneck" stations, i.e., set the constraint $\gamma_i = \gamma_j$ for all $i,j \in \mathcal{S}$, so as to (i) enforce a natural notion of ``service fairness,'' and (ii) prevent needless accumulation of empty vehicles at the stations.
	
	However, it is desirable to minimize the impact that the rebalancing vehicles have on the road network.
	We achieve this by minimizing the expected number of vehicles on the road serving customer and rebalancing demands. Using Equation (\ref{eq:littles_law}), the expected amount of vehicles on a given road link $i$ is given by $\Lambda_i(m)T_i$. 
	
	Lastly, we wish to avoid congestion on the individual road links. Traditionally, the relation between vehicle flow and congestion is parametrized by two basic quantities: the \textit{free-flow travel time} $T_i$, i.e., the time it takes to traverse a link in absence of other traffic; and the \textit{nominal capacity} $C_i$, i.e., the measure of traffic flow beyond which travel time increases very rapidly \cite{MP:15a}. Assuming that travel time remains approximately constant when traffic is below the nominal capacity (an assumption typical of many state-of-the-art traffic models \cite{MP:15a}), our approach is to keep the expected traffic $\Lambda_i(m)T_i$ below the nominal capacity $C_i$ and thus avoid congestion effects. Note that by constraining in expectation there is a non-zero probability of exceeding the constraint; however, in Section \ref{section:asymp_opt_sol}, we show that, asymptotically, it is also possible to constrain the \emph{probability} of exceeding the congestion constraint.
	
	Accordingly, the routing problem we wish to study in this paper (henceforth referred to as the \emph{Optimal Stochastic Capacitated AMoD Routing and Rebalancing problem}, or OSCARR) can now be formulated as follows: 
		\begin{subequations}\label{eq:cbrrp}
		{\small
		\begin{align}
		& \underset{\lambda^{(r \in \mathcal{R})}, \alpha_{ij}^{(k \in \mathcal{K})}}{\text{minimize}}
		& & \sum_{i \in \mathcal{I}}\Lambda_i(m)T_i, \nonumber\\
		& \text{subject to}
		& & \gamma_i = \gamma_j,& i,j \in \mathcal{S}, \label{eq:equal_utilization}\\
		&
		& & \Lambda_i(m)T_i \leq C_i, & i \in \mathcal{I}, \label{eq:congestion_constraint}\\
		&
		& & \pi_{s^{(k)}, k} = \sum_{k' \in \mathcal{K}} \sum_{j \in \mathcal{N}} \pi_{j,k} p_{j,k;t^{(k)},k'}, & k \in \mathcal{K}, \label{eq:routing_constraint}\\
		&
		& & \pi_{i,k} = \sum_{k'= 1}^{K} \sum_{j = 1}^{N} \pi_{j,k'} p_{j,k';i,k} & i \in \{\mathcal{S} \cup \mathcal{I} \}, \label{eq:traffic_eqs} \\
		&
		& & \sum_{j} \alpha_{ij}^{(k)} = 1, \quad \alpha_{ij}^{(k)} \ge 0, & i,j \in \{\mathcal{S} \cup \mathcal{I} \}, \label{eq:probabilistic_routing}\\
		& 
		& & \lambda_r \ge 0, & r \in \mathcal{R}. \label{eq:nonnegative_demands}
		\end{align}}
		\end{subequations}
		Constraint (\ref{eq:equal_utilization}) enforces equal availability at all stations, while constraint (\ref{eq:congestion_constraint}) ensures that all road links are (on average) uncongested.
	Constraints (\ref{eq:routing_constraint})--(\ref{eq:nonnegative_demands}) enforce consistency in the model. Namely, (\ref{eq:routing_constraint}) ensures that all traffic leaving the source $s^{(k)}$ of class $k$ arrives at its destination $t^{(k)}$, (\ref{eq:traffic_eqs}) enforces the traffic equations \eqref{eq:throughput}, (\ref{eq:probabilistic_routing}) ensures that $\alpha^{(k)}_{ij}$ is a valid probability measure, and (\ref{eq:nonnegative_demands}) guarantees nonnegative rebalancing rates.
	
	At this point, we would like to reiterate some assumptions built into the model. 
	First, the proposed model is time-invariant. That is, we assume that customer and rebalancing rates remain constant for the segment of time under analysis, and that the network is able to reach its equilibrium distribution. An option for including the variation of customer demand over time is to discretize a period of time into smaller segments, each with its own arrival parameters and resulting rebalancing rates. 
	Second, the passenger loss model assumes impatient customers and is well suited for cases where high level of service is required. This allows us to simplify the model by focusing only on the vehicle process; however, it disregards the fact that customers may have different waiting thresholds and, consequently, the queuing process of waiting customers.
	Third, we focus on keeping traffic within the nominal road capacities in expectation, allowing us to assume load-independent travel times and to model exogenous traffic as a reduction in road capacity.
	Finally, we make no assumptions on the distribution of travel times on the road links: the analysis proposed in this paper captures arbitrary distributions of travel times and only depends on the \emph{mean} travel time. 

	\vspace{-5mm}
	\section{Asymptotically Optimal Algorithms for AMoD routing}\label{proposed_solution}
	\vspace{-5mm}
	
	
	In this section we show that, as the fleet size goes to infinity, the solution to OSCARR can be found by solving a linear program. This insight allows the efficient computation of asymptotically optimal routing and rebalancing policies and of the resulting performance parameters for AMoD systems with very large numbers of customers, vehicles and road links. 
	
	First, we introduce simplifications possible due to the nature of the routing matrix $\{\alpha_{i,j}^{(k)}\}_{(i,j),k}$. Then, we express the problem from a flow conservation perspective. Finally, we show that the problem allows an asymptotically optimal solution with bounds on the probability of exceeding road capacities. The solution we find is equivalent to the one presented in \cite{RZ-FR-MP:16a}: thus, we show that the network flow model in \cite{RZ-FR-MP:16a} also captures the asymptotic behavior of a stochastic AMoD routing and rebalancing problem.

	\vspace{-5mm}	
	\subsection{Folding of traffic equations}
	\vspace{-5mm}
	The next two lemmas show that the traffic equations (\ref{eq:throughput}) at the SS queues can be expressed in terms of other SS queues, and that the balanced network constraint can be expressed in terms of real and virtual passenger arrivals. The proof of Lemmas \ref{le:folding} and \ref{le:lambdas_constr} are omitted for space reasons and can be found in \ifarxiv
	the Appendix.\else the extended version of this paper \cite{RI-FR-RZ-MP:16EV}.\fi
	\begin{lemma}[Folding of traffic equations]
	\label{le:folding}
		Let $\mathcal{Z}$ be a feasible solution to OSCARR. Then, the relative throughputs of the single server stations can be expressed in terms of the relative throughputs of the other single server stations, that is
		\begin{equation}
		\small
		\label{eq:folded_traffic_origins_lemma}
		\pi_i = \sum_{k \in \mathcal{D}_i} \widetilde{p}_{s^{(k)}}^{(k)} \pi_{s^{(k)}}, \quad \text{for all } i \in S.
		\end{equation}
	\end{lemma}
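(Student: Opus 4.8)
The plan is to unfold the traffic equations~(\ref{eq:throughput}) at the single-server stations using the explicit routing matrix~(\ref{eq:routing}), and then to collapse the road-link terms that appear back onto stations by means of the class-conservation constraint~(\ref{eq:routing_constraint}) (available since $\mathcal{Z}$ is feasible for OSCARR). Throughout, for a station $i \in \mathcal{S}$ let $v_i := \mathrm{Parent}(i) = \mathrm{Child}(i)$ be the intersection it is attached to, so that $\mathcal{U}_{v_i}$ is the set of road links entering $v_i$.

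First I would establish the pointwise identity
\begin{equation}\label{eq:plan_key}
\pi_{i,k} = \widetilde{p}_i^{(k)}\,\pi_i \quad \text{for all } i \in \mathcal{S},\ k \in \mathcal{O}_i, \qquad \text{and} \qquad \pi_{i,k} = 0 \ \text{ for } k \notin \mathcal{O}_i .
\end{equation}
To this end, evaluate $p_{j,k';i,k}$ for $i \in \mathcal{S}$: by the model of Section~\ref{sec:AMoD2BCMP} a vehicle moves only along road links between its origin and destination stations, so $\alpha^{(k)}_{j,i} = 0$ whenever $i \in \mathcal{S}$ and $i \ne t^{(k)}$; hence the first case of~(\ref{eq:routing}) never applies at $i$, the third is zero, and only the second survives, which forces $k \in \mathcal{O}_i$, $t^{(k')} = i$ (so $k' \in \mathcal{D}_i$), $j \in \mathcal{U}_{v_i}$, with value $\widetilde{p}_i^{(k)}$. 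This gives $\pi_{i,k} = 0$ for $k \notin \mathcal{O}_i$ and $\pi_{i,k} = \widetilde{p}_i^{(k)} \sum_{k' \in \mathcal{D}_i} \sum_{j \in \mathcal{U}_{v_i}} \pi_{j,k'}$ for $k \in \mathcal{O}_i$. Summing the latter over $k \in \mathcal{O}_i$ and using $\sum_{k \in \mathcal{O}_i} \widetilde{p}_i^{(k)} = \sum_{k \in \mathcal{O}_i} \lambda^{(k)}/\widetilde{\lambda}_i = 1$ identifies $\pi_i = \sum_{k \in \mathcal{O}_i}\pi_{i,k} = \sum_{k' \in \mathcal{D}_i}\sum_{j \in \mathcal{U}_{v_i}} \pi_{j,k'}$; plugging this back into the per-class expression yields~\eqref{eq:plan_key}.

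It then remains to rewrite the inner road-link sum in terms of origin stations. For $k' \in \mathcal{D}_i$ we have $t^{(k')} = i$, and substituting~(\ref{eq:routing}) into the OSCARR constraint~(\ref{eq:routing_constraint}) (again using that $\sum_{k'' \in \mathcal{O}_i} \widetilde{p}_i^{(k'')} = 1$) collapses it to $\pi_{s^{(k')},k'} = \sum_{j \in \mathcal{U}_{v_i}} \pi_{j,k'}$. Combined with~\eqref{eq:plan_key} this gives $\pi_i = \sum_{k' \in \mathcal{D}_i} \pi_{s^{(k')},k'}$, and a final application of~\eqref{eq:plan_key} at the station $s^{(k')}$ with class $k' \in \mathcal{O}_{s^{(k')}}$ turns $\pi_{s^{(k')},k'}$ into $\widetilde{p}_{s^{(k')}}^{(k')}\,\pi_{s^{(k')}}$, which is exactly~\eqref{eq:folded_traffic_origins_lemma}.

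I expect the main obstacle to be the index bookkeeping in the first step: one must carefully argue that the first case of the routing matrix~(\ref{eq:routing}) contributes nothing at a single-server station --- this is precisely where the modeling assumption that vehicles travel only over road links until adjacent to their destination is used --- and one must break the apparent circularity between the identity $\pi_{i,k} = \widetilde{p}_i^{(k)}\pi_i$ and the road-link sum that defines $\pi_i$, which is handled by first summing over $k \in \mathcal{O}_i$ to pin down the common proportionality constant. Everything after that is direct substitution.
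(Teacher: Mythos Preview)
Your proposal is correct and follows essentially the same route as the paper: first derive $\pi_{i,k}=\widetilde p_i^{(k)}\pi_i$ at stations by specializing the routing matrix~\eqref{eq:routing} in the traffic equations, then invoke the OSCARR constraint~\eqref{eq:routing_constraint} to collapse the incoming road-link sum onto $\pi_{s^{(k')},k'}$, and finally apply the first identity at the origin station. Your write-up is in fact a bit more careful than the paper's in explicitly isolating the modeling assumption $\alpha^{(k)}_{j,i}=0$ for $i\in\mathcal S\setminus\{t^{(k)}\}$ and in noting $\pi_{i,k}=0$ for $k\notin\mathcal O_i$, but the argument is the same.
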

	\begin{lemma}[Balanced system in terms of arrival rates]
	\label{le:lambdas_constr} 
		Let $\mathcal{Z}$ be a feasible solution to OSCARR, then the constraint $\gamma_i = \gamma_j\,\,\forall \, i,j,$ is equivalent to
		\begin{equation}\label{eq:lambdas_constr} 
		\small
			 \widetilde{\lambda}_i = \sum_{k \in \mathcal{D}_i} \lambda_{s^{(k)}}^{(k)}.
		\end{equation}
	\end{lemma}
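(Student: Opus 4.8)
The plan is to first reduce the station utilizations $\gamma_i$ to a transparent form and then recognize condition \eqref{eq:lambdas_constr} as exactly the statement that the resulting vector of utilizations is a fixed point of a stochastic matrix. Since the service rate at a station is class-independent with $\mu_i^o=\widetilde{\lambda}_i$, and since a vehicle parked at station $i$ necessarily carries a class in $\mathcal{O}_i$ (it switched class upon arrival to $i=t^{(k)}$), we have $\pi_{i,k}=0$ for $k\notin\mathcal{O}_i$, hence
\[
\gamma_i=\sum_{k\in\mathcal{O}_i}\frac{\pi_{i,k}}{\mu_i^o}=\frac{\pi_i}{\widetilde{\lambda}_i},\qquad i\in\mathcal{S}.
\]
Thus constraint \eqref{eq:equal_utilization} is equivalent to: $\pi_i=c\,\widetilde{\lambda}_i$ for all $i\in\mathcal{S}$ and some constant $c>0$ (positivity holds because every station is visited in a feasible solution). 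The bridge to the arrival rates is Lemma~\ref{le:folding}, which I will combine with the identity $\widetilde{p}_{s^{(k)}}^{(k)}=\lambda^{(k)}/\widetilde{\lambda}_{s^{(k)}}$ (the definition of the class-switching probabilities, using $k\in\mathcal{O}_{s^{(k)}}$).

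For the direction \eqref{eq:equal_utilization}$\Rightarrow$\eqref{eq:lambdas_constr}: substitute $\pi_{s^{(k)}}=c\,\widetilde{\lambda}_{s^{(k)}}$ into \eqref{eq:folded_traffic_origins_lemma}, obtaining $c\,\widetilde{\lambda}_i=\pi_i=\sum_{k\in\mathcal{D}_i}\frac{\lambda^{(k)}}{\widetilde{\lambda}_{s^{(k)}}}\,c\,\widetilde{\lambda}_{s^{(k)}}=c\sum_{k\in\mathcal{D}_i}\lambda^{(k)}$; cancelling $c$ yields $\widetilde{\lambda}_i=\sum_{k\in\mathcal{D}_i}\lambda^{(k)}$, which is \eqref{eq:lambdas_constr}. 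This direction is pure substitution and should be routine.

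For the converse \eqref{eq:lambdas_constr}$\Rightarrow$\eqref{eq:equal_utilization}: dividing \eqref{eq:folded_traffic_origins_lemma} by $\widetilde{\lambda}_i$ and using $\gamma_j=\pi_j/\widetilde{\lambda}_j$ gives $\gamma_i\,\widetilde{\lambda}_i=\sum_{k\in\mathcal{D}_i}\lambda^{(k)}\gamma_{s^{(k)}}$; replacing $\widetilde{\lambda}_i$ on the left via hypothesis \eqref{eq:lambdas_constr} yields $\sum_{k\in\mathcal{D}_i}\lambda^{(k)}\bigl(\gamma_i-\gamma_{s^{(k)}}\bigr)=0$ for every $i\in\mathcal{S}$. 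Equivalently, with $P_{i,s}:=\tfrac{1}{\widetilde{\lambda}_i}\sum_{k\in\mathcal{D}_i,\,s^{(k)}=s}\lambda^{(k)}$, the vector $(\gamma_i)_{i\in\mathcal{S}}$ satisfies $\gamma=P\gamma$ with $P$ row-stochastic (rows sum to one precisely by \eqref{eq:lambdas_constr}). A maximum-principle argument then finishes: if $i^\star$ attains $\max_i\gamma_i$, the fixed-point relation forces $\gamma_{s^{(k)}}=\gamma_{i^\star}$ for every $k\in\mathcal{D}_{i^\star}$ with $\lambda^{(k)}>0$, and iterating propagates the maximal value along the reverse demand graph; the mirror argument handles $\min_i\gamma_i$.

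I expect the main obstacle to be exactly this last step: it needs the (reverse) demand graph on $\mathcal{S}$ --- edges $s\to t$ whenever some real or virtual class has $s^{(k)}=s$, $t^{(k)}=t$, $\lambda^{(k)}>0$ --- to be strongly connected, i.e.\ $P$ irreducible. I would argue this follows from feasibility of $\mathcal{Z}$ (every station must be able to both receive and dispatch vehicles for the traffic equations and the station loads to be simultaneously consistent), or else state irreducibility of the demand structure as a standing assumption; granting it, Perron--Frobenius (or the propagation argument above) gives $\gamma_i\equiv\text{const}$, i.e.\ \eqref{eq:equal_utilization}. Everything else is elementary algebra built on the folding identity of Lemma~\ref{le:folding}.
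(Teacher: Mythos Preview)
Your proposal is correct and follows essentially the same route as the paper: both derive the identity $\gamma_i\,\widetilde{\lambda}_i=\sum_{k\in\mathcal{D}_i}\lambda^{(k)}\gamma_{s^{(k)}}$ from Lemma~\ref{le:folding}, handle the direction \eqref{eq:equal_utilization}$\Rightarrow$\eqref{eq:lambdas_constr} by substituting $\gamma_i\equiv\gamma$ and cancelling, and handle \eqref{eq:lambdas_constr}$\Rightarrow$\eqref{eq:equal_utilization} by recognizing $\gamma=P\gamma$ for a row-stochastic matrix (your $P$ is the paper's $Z$) and invoking Perron--Frobenius under irreducibility. The only difference is cosmetic: you explicitly flag the irreducibility assumption and offer a maximum-principle alternative, whereas the paper simply asserts irreducibility of the Markov chain without further comment.
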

	
	\vspace{-5mm}	
	\subsection{Asymptotically Optimal Solution}
	\vspace{-5mm}
	\label{section:asymp_opt_sol}
	
	As discussed in Section \ref{sec:BCMPnets}, relative throughputs are computed up to a constant multiplicative factor.	
	Thus, without loss of generality, we can set the additional constraint $\pi_{s^{(1)}} = \widetilde{\lambda}_1$, which, along with (\ref{eq:equal_utilization}), implies that 	
	\begin{equation}\label{eq:unit_util}
	\small
		\pi_i = \widetilde{\lambda}_i, \quad \pi_{s^{(k)},k} = \lambda^{(k)}, \quad \text{and} \quad \gamma_i = 1, \quad \text{for all } i \in \mathcal{S}.
	\end{equation}
	As seen in Section \ref{section:asymp_section}, the availabilities of stations with the highest relative utilization tend to one as the fleet size goes to infinity. Since the stations are modeled as single-server queues, $\rho_i = \gamma_i$ for all $i \in \mathcal{S}$. Therefore, if the system is balanced, $\gamma_i = \gamma_S^\text{max} = \gamma = 1$ for all $i \in \mathcal{S}$. That is, the set of bottleneck stations $\mathcal{B}$ includes all stations in $\mathcal{S}$ and $\lim_{m \to \infty} \frac{G(m-1)}{G(m)} = 1$ by Equation \eqref{eq:availability_lim} . 
	
	As $m \rightarrow \infty$ and $\frac{G(m-1)}{G(m)} \rightarrow 1$, the throughput at every station $\Lambda_i(m)$ becomes a linear function of the relative frequency of visits to that station, according to Equation \eqref{eq:nodethroughput}. Thus, the objective function and the constraints in \eqref{eq:cbrrp} are reduced to linear functions. We define the resulting problem (i.e., Problem \eqref{eq:cbrrp} with $G(m-1)/G(m)=1$) as the \emph{Asymptotically Optimal Stochastic Capacitated AMoD Routing and Rebalancing problem}, or A-OSCARR. The following lemma shows that the optimal solution to OSCARR approaches the optimal solution to A-OSCARR as $m$ increases.

	\begin{lemma}[Asymptotic behavior of OSCARR]
		Let $\{\pi_{i,k}^{*}(m)\}_{i,k}$ be the set of relative throughputs corresponding to an optimal solution to OSCARR with a given set of customer demands $\{\lambda_i\}_{i}$ and a fleet size $m$. Also, let $\{\hat{\pi}_{i,k}\}_{i,k}$ be the set of relative throughputs corresponding to an optimal solution to A-OSCARR for the same set of customer demands. Then,
		\begin{equation}\label{eq:asymptotical_optimum}
		\small
		\begin{split}
		\lim_{m \to \infty} & \frac{G(m-1)}{G(m)}\sum_{i \in I} T_{i} \sum_{k \in \mathcal{K}} \pi_{i,k}^{*} = \sum_{i \in I} T_{i} \sum_{k \in \mathcal{K}} \hat{\pi}_{i,k}\,.
		\end{split}
		\end{equation}
	\end{lemma}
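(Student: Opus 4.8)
The plan is to observe that \eqref{eq:cbrrp} and A-OSCARR are, up to the scalar factor $\beta_m:=G(m-1)/G(m)$, the \emph{same} linear program, and then to let $\beta_m\to1$. By \eqref{eq:nodethroughput} we have $\Lambda_i(m)=\beta_m\pi_i$, so the objective of \eqref{eq:cbrrp} equals $\beta_m\,h(\pi)$ with $h(\pi):=\sum_{i\in\mathcal I}T_i\sum_{k\in\mathcal K}\pi_{i,k}$, and the congestion constraint \eqref{eq:congestion_constraint} reads $T_i\pi_i\le C_i/\beta_m$; every remaining constraint of \eqref{eq:cbrrp} — the balanced-system condition \eqref{eq:equal_utilization}, the routing/traffic equations, stochasticity of $\alpha$, and nonnegativity of the $\lambda^{(r)}$ — is independent of $m$. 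Thus OSCARR is ``$\min\,\beta_m h(\pi)$ over $\{T_i\pi_i\le C_i/\beta_m\}\cap\mathcal F$'' and A-OSCARR is the case $\beta_m\equiv1$, where $\mathcal F$ collects the $m$-independent constraints. Writing $V(c)$ for the optimal value of A-OSCARR with congestion right-hand sides $c_i$, and $J^*(m):=\beta_m h(\pi^*(m))$ for the optimal value of OSCARR, the statement \eqref{eq:asymptotical_optimum} is exactly $J^*(m)\to h(\hat\pi)=V(C)$.

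I would then record that $0<\beta_m\le1$ and $\beta_m\to1$: under the normalization \eqref{eq:unit_util} every station has $\gamma_i=1$, so, the stations being single-server, the bottleneck set is $\mathcal B=\mathcal S$ and $\beta_m$ coincides with the availability $A_i(m)$ of any station, whence $\beta_m\in(0,1]$ and $\beta_m\to1$ by \eqref{eq:availability_lim}. The theorem then follows by squeezing. \emph{Upper bound:} the A-OSCARR optimizer $\hat\pi$ satisfies $T_i\hat\pi_i\le C_i\le C_i/\beta_m$ and lies in $\mathcal F$, hence is feasible for OSCARR, so $J^*(m)\le\beta_m h(\hat\pi)=\beta_m V(C)\le V(C)$ and $\limsup_m J^*(m)\le V(C)$. \emph{Lower bound:} every OSCARR-feasible point obeys $T_i\pi_i\le C_i/\beta_m$ and lies in $\mathcal F$, i.e.\ is feasible for A-OSCARR with the relaxed capacities $C_i/\beta_m$, so $h(\pi^*(m))\ge V(C/\beta_m)$; since the optimal value of a feasible, bounded-below LP is continuous in its right-hand side, $V(C/\beta_m)\to V(C)$, whence $J^*(m)=\beta_m h(\pi^*(m))\ge\beta_m V(C/\beta_m)\to V(C)$ and $\liminf_m J^*(m)\ge V(C)$. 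Combining the two bounds gives $J^*(m)\to V(C)$, which is \eqref{eq:asymptotical_optimum}.

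The main obstacle is making ``$\beta_m\to1$'' rigorous \emph{along the optimal sequence}: \eqref{eq:availability_lim} is a statement about a fixed network, whereas $\beta_m=G(m-1)/G(m)$ is evaluated on the $m$-dependent network $\{\pi^*_{i,k}(m)\}$, so the convergence must be uniform in the varying parameters. This reduces to a uniform bound on the total infinite-server utilization $\Gamma_m:=\sum_{i\in\mathcal I}\gamma_i(m)=\sum_{i\in\mathcal I}T_i\pi^*_i(m)$, since once $\Gamma_m$ is bounded the product form of $G$ gives $\beta_m\to1$. Such a bound I would extract from the a priori estimate $\beta_m\Gamma_m=J^*(m)\le V(C)$ established above, together with the fact that, for each fixed fleet size $m$, the quantity $\beta_m\Gamma_m$ is increasing in $\Gamma_m$ and tends to $m$ as $\Gamma_m\to\infty$; hence the sub-level set $\{\Gamma:\beta_m\Gamma\le V(C)\}$ is bounded (by $\approx V(C)$) for all large $m$, which forces $\Gamma_m$ bounded and closes the loop. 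Alternatively, one may argue that an optimal routing policy can be taken acyclic, so that flow conservation yields $\pi^*_{i,k}(m)\le\lambda^{(k)}$ and hence $\gamma_i(m)\le T_i\sum_k\lambda^{(k)}$ directly. Establishing the monotonicity and growth of $\beta_m\Gamma_m$ in $\Gamma_m$ — equivalently, ruling out $\Gamma_m\to\infty$ at the optimum — is the delicate part; the rest is bookkeeping.
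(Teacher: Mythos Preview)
Your argument is correct and considerably more careful than the paper's. The paper gives a four-line contradiction: it writes the two optimality inequalities $\beta_m\sum T_i\pi^*_i\le\beta_m\sum T_i\pi_i$ (for $\pi$ feasible in OSCARR) and $\sum T_i\hat\pi_i\le\sum T_i\pi_i$ (for $\pi$ feasible in A-OSCARR), passes to the limit in the first using \eqref{eq:availability_lim}, and observes that $\sum T_i\lim_m\pi^*_i\neq\sum T_i\hat\pi_i$ would violate one of the two. What this buys is brevity; what it costs is exactly the issues you flag and the paper does not address: it tacitly assumes $\lim_m\pi^*_i$ exists, that the two feasible sets coincide in the limit (so that $\hat\pi$ and $\lim_m\pi^*$ can each be substituted into the other problem's optimality inequality), and that $\beta_m\to1$ even though $G$ is computed on the $m$-dependent network $\pi^*(m)$. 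Your squeezing argument via $V(C/\beta_m)$ and LP right-hand-side continuity sidesteps the first two entirely, and you correctly isolate the third---uniform control of $\Gamma_m=\sum_{i\in\mathcal I}T_i\pi^*_i(m)$ so that $\beta_m(\pi^*(m))\to1$---as the genuine content. Either of the two routes you sketch for bounding $\Gamma_m$ (the Little's-law bound $\beta_m\Gamma_m\le V(C)$ combined with monotonicity of $\beta_m\Gamma_m$ in $\Gamma_m$, or the acyclic-routing bound $\pi^*_{i,k}\le\lambda^{(k)}$) would close the gap; the paper simply does not engage with it.
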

	\begin{proof}
		We arrive to the proof by contradiction. Recall that $\pi_i = \sum_{k \in \mathcal{K}} \pi_{i,k}$. Assume Equation (\ref{eq:asymptotical_optimum}) did not hold. By definition, the following equations hold for all $m$ and $\{\pi_{i,k}\}_{i,k}$:

		\[
\setlength{\abovedisplayshortskip}{\abovedisplayskip}
\setlength{\belowdisplayshortskip}{\belowdisplayskip}
\begin{minipage}{.55\textwidth}
  \vspace*{-\baselineskip}
  \begin{equation}\label{eq:def_fsharp}
    \frac{G(m-1)}{G(m)}\sum_{i \in I} T_{i} \pi_{i}^{*} \leq \frac{G(m-1)}{G(m)}\sum_{i \in I} T_{i} \pi_{i}\,,
  \end{equation}
\end{minipage}
%
%
\begin{minipage}{.45\textwidth}
  \vspace*{-\baselineskip}
  \begin{equation}\label{eq:def_fa}
  \sum_{i \in I} T_{i} \hat{\pi}_{i} \leq \sum_{i \in I} T_{i} \pi_{i}\,.
  \end{equation}
\end{minipage}
\]


		Applying the limit to (\ref{eq:def_fsharp}) and using (\ref{eq:availability_lim}), we obtain $\small \sum_{i \in I} T_{i} \lim_{m \to \infty}(\pi_{i}^{*}) \leq \sum_{i \in I} T_{i} \pi_{i}.$
		However, according to our assumption, either $\small \sum_{i \in I} T_{i} \lim_{m \to \infty}(\pi_{i}^{*}) > \sum_{i \in I} T_{i} \hat{\pi}_{i}$ or $ \small \sum_{i \in I} T_{i} \lim_{m \to \infty}(\pi_{i}^{*}) < \sum_{i \in I} T_{i} \hat{\pi}_{i}$ but the former violates Equation \eqref{eq:def_fsharp}, and the latter \eqref{eq:def_fa}.
	\end{proof}

	As discussed in Section \ref{section:problem_formulation}, constraint \ref{eq:congestion_constraint} only enforces an upper bound on the expected number of vehicles traversing a link. However, in the asymptotic regime, it is possible to enforce an analytical upper bound on the \emph{probability} of exceeding the nominal capacity of any given road link. As seen in Equation (\ref{eq:open_network}), as the fleet size increases, the distribution of the number of vehicles on a road link $i$ converges to a Poisson distribution with mean $T_i \pi_i $. The cumulative density function of a Poisson distribution is given by $Pr(X < \bar{x}) = Q(\lfloor\bar{x} + 1\rfloor, \tilde C)$, where $\tilde C$ is the mean of the distribution and $Q$ is the regularized upper incomplete gamma function. Let $\epsilon$ be the maximum tolerable probability of exceeding the nominal capacity. Set $\widehat{C}_{i} = Q^{-1}( 1 - \epsilon; \lfloor C_{i} + 1\rfloor)$, i.e. $Q(\lfloor C_{i} + 1\rfloor,\widehat{C}_{i})=1-\epsilon$. Then the constraint $\Lambda_i(m)T_i \leq \widehat{C}_{i}$ is equivalent to $\lim_{m \rightarrow \infty} \mathbb{P}_i(x_i < C_i ; m) \geq 1 - \epsilon$.

	\vspace{-5mm}
	\subsection{Linear programming formulation and multi-commodity flow equivalence}\label{section:flow_cons}
	\vspace{-5mm}
	We now show that an asymptotically optimal routing and rebalancing problem can be framed as a multi-commodity flow problem. Specifically, we show that A-OSCARR is equivalent to the Congestion-Free Routing and Rebalancing problem presented in \cite{RZ-FR-MP:16a}: thus, (i) A-OSCARR can be solved efficiently by ad-hoc algorithms for multi-commodity flow (e.g. \cite{AVG-JDO-SP-CS:98}) and (ii) the theoretical results presented in \cite{RZ-FR-MP:16a} (namely, the finding that rebalancing trips do not increase congestion) extend, in expectation, to \emph{stochastic} systems.
	
	First, we show that the problem can be solved exclusively for the relative throughputs on the road links, and then we show that the resulting equations are equivalent to a minimum cost, multi-commodity flow problem.

	The relative throughput going from an intersection $i$ into adjacent roads is $\sum_{j \in \mathcal{W}_{i}'} \pi_{j,k}$, where $\mathcal{W}_i' = \{\mathcal{W}_i \cap \mathcal{I}\}$ is the set of road links that begin in node $i$. Similarly, the relative throughput entering the intersection $i$ from the road network is $\sum_{j \in \mathcal{U}_{i}'} \pi_{j,k}$, where $\mathcal{U}_i' = \{\mathcal{U}_i \cap \mathcal{I}\}$ is the set of road links terminating in $i$. Additionally, define $d_i^{(k)}$ as the difference between the relative throughput leaving the intersection and the relative throughput entering the intersection. From (\ref{eq:traffic_eqs}), (\ref{eq:routing_constraint}), and (\ref{eq:unit_util}), we see that for customer classes
	\begin{equation*}
	\small
	\sum_{j \in \mathcal{W}_{i}'} \pi_{j,q} - \sum_{j \in \mathcal{U}_{i}'} \pi_{j,q} = d_i^{(q)}, \quad \text{where} \quad
	d_i^{(q)} = \begin{cases}
	\lambda^{(q)} &\text{if $i = s^{(q)}$},\\
	- \lambda^{(q)} &\text{if $i = t^{(q)}$}, \\
	0 &\text{otherwise}.
	\end{cases}
	\end{equation*}
	While the rebalancing arrival rates $\lambda^{(r)}$ are not fixed, we do know from Equation \eqref{eq:routing_constraint} and from the definition of $d_i^{(q)}$ that $d_{s^{(r)}}^{(r)} = -d_{t^{(r)}}^{(r)}$. Thus,
	\begin{equation*}
	\small
	\sum_{j \in \mathcal{W}_{s^{(r)}}'} \pi_{j,r} - \sum_{j \in \mathcal{U}_{s^{(r)}}'} \pi_{j,r} = - \sum_{j \in \mathcal{W}_{t^{(r)}}'} \pi_{j,r} + \sum_{j \in \mathcal{U}_{t^{(r)}}'} \pi_{j,r}. 
	\end{equation*}
	Finally, we can rewrite Lemma \ref{le:lambdas_constr} as
		\[
		\small
		\sum_{q \in \mathcal{Q}}d_i^{(q)} + \sum_{r \in \mathcal{R}}\sum_{j \in \mathcal{W}_{i}'} \pi_{j,r} - \sum_{j \in \mathcal{U}_{i}'} \pi_{j,r} = 0.
		\]
	Thus, in the asymptotic regime Problem (\ref{eq:cbrrp}) can be restated as
	\begin{subequations}\label{eq:cbrrp_matrix}
	\small
	\begin{align}
	& \underset{\pi_{i \in \mathcal{I},k \in \mathcal{K}}}{\text{minimize}}
	& & \!\!\!\sum_{i \in I} T_{i} \sum_{k \in \mathcal{K}} \pi_{i,k}, \nonumber\\
	& \text{subject to}
	& & \!\!\!\sum_{q \in \mathcal{Q}}d_i^{(q)} + \sum_{r \in \mathcal{R}}\sum_{j \in \mathcal{W}_{i}'} \pi_{j,r} - \sum_{j \in \mathcal{U}_{i}'} \pi_{j,r} = 0 & & \forall i \in \mathcal{S}, \label{eq:equal_utilization_flow}\\	
	&
	& & \!\!\!T_{i} \sum_{k \in \mathcal{K}} \pi_{j,k} \leq \widehat{C}_{i} & & \forall i \in \mathcal{I}, \label{eq:congestion_constraint_flow}\\
	&
	& & \!\!\!\sum_{j \in \mathcal{W}_{i}'} \pi_{j,q} - \sum_{j \in \mathcal{U}_{i}'}\pi_{j,q} = d_i^{(q)} & &\forall i \in \mathcal{S}, \label{eq:customer_class_flow}\\
	&
	& & \!\!\!\sum_{j \in \mathcal{W}_{s^{(r)}}'} \!\!\! \pi_{j,r} -\!\!\! \sum_{j \in \mathcal{U}_{s^{(r)}}'} \!\!\! \pi_{j,r} = \!\!\!\sum_{j \in \mathcal{U}_{t^{(r)}}'} \!\!\! \pi_{j,r} - \!\!\!\sum_{j \in \mathcal{W}_{t^{(r)}}'} \!\!\! \pi_{j,r} & & \forall r \in \mathcal{R}, \label{eq:routing_constraint_flow}\\
	&
	& & \!\!\!\sum_{j \in \mathcal{W}_{i}'} \pi_{j,r} - \sum_{j \in \mathcal{U}_{i}'}\pi_{j,r} = 0 & &\forall i \in \mathcal{S} \setminus \{s^{(r)},t^{(r)}\}, \label{eq:traffic_eqs_flow} \\
	&
	& & \!\!\!\sum_{j \in \mathcal{W}_{s^{(r)}}'} \pi_{j,r} - \sum_{j \in \mathcal{U}_{s^{(r)}}'} \pi_{j,r} \geq 0 && \forall r \in \mathcal{R}, \label{eq:nonnegative_demands_flow}\\
	&
	& & \!\!\!\pi_{i,k} \ge 0, & &\forall i \in \mathcal{I}, k \in \mathcal{K}.
	\end{align}
	\end{subequations}
	Here, constraints (\ref{eq:equal_utilization_flow}) and (\ref{eq:congestion_constraint_flow}) are direct equivalents to (\ref{eq:equal_utilization}) and (\ref{eq:congestion_constraint}), respectively. By keeping traffic continuity and equating throughputs at source and target stations, (\ref{eq:customer_class_flow}) enforces (\ref{eq:routing_constraint}) and (\ref{eq:traffic_eqs}) for the customer classes. For the rebalancing classes, (\ref{eq:routing_constraint_flow}) is equivalent to (\ref{eq:routing_constraint}) and (\ref{eq:traffic_eqs_flow}) to (\ref{eq:traffic_eqs}). Non-negativity of rebalancing rates (\ref{eq:nonnegative_demands}) is kept by (\ref{eq:nonnegative_demands_flow}).
	
	Thus, A-OSCARR can be solved efficiently as a linear program.
	Note that this formulation is very similar to the multi-commodity flow found in \cite{RZ-FR-MP:16a}. The formulation in this paper prescribes specific routing policies for distinct rebalancing origin-destination pairs, while \cite{RZ-FR-MP:16a} only computes a single ``rebalancing flow": however, stochastic routing policies can be computed from the rebalancing flow in \cite{RZ-FR-MP:16a} with a flow decomposition algorithm \cite{LRF-DRF:62}.

	\vspace{-5mm}
	\section{Numerical Experiments}\label{numerical_experiments}
	\vspace{-5mm}

	To illustrate a real-life application of the results in this paper, we applied our model to a case study of Manhattan, and computed the system performance metrics as a function of fleet size using the Mean Value Analysis. Results show that the solution correctly balances vehicle availability across stations while keeping road traffic within the capacity constraints, and that the assumption of load-independent travel times is relatively well founded.
	\vspace{-2mm}
	\begin{figure}[H]
		\centering
		\ifarxiv \includegraphics[width=0.9\textwidth,trim={0 0 0 0},clip]{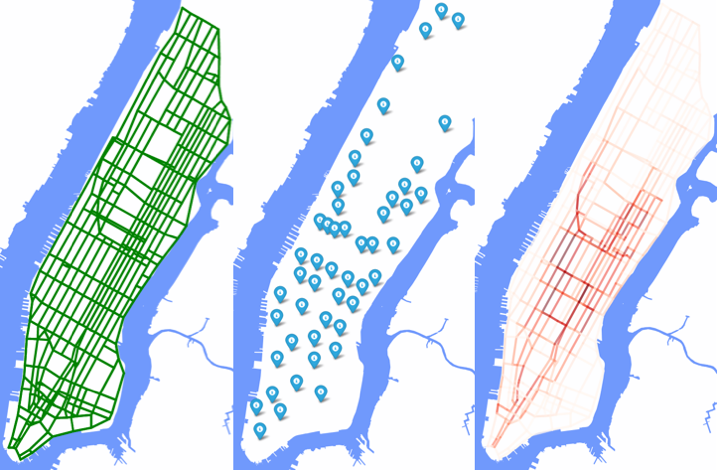}
		\else
		\vspace{-2mm}
		\includegraphics[width=0.83\textwidth,trim={0 5mm 0 75mm},clip]{manhattan_pics_min.png}
		\fi
		\caption{Manhattan scenario \ifarxiv\else  (detail)\fi. Left: modeled road network. Center: Station locations. Right: Resulting vehicular flow (darker flows show higher vehicular presence).}
		\label{fig:manhattan_pics}
	\end{figure}
	The model used for this case study consists of a simplified version of Manhattan's real road network, with 1005 road links and 357 nodes. To select station positions and compute the rates $\lambda^{(q)}$ of the origin-destination flows $\mathcal{Q}$, we used the taxi trips within Manhattan that took place between 7:00AM and 8:00AM on March 1, 2012 (22,416 trips) from the New York City Taxi and Limousine Commission dataset\footnote{http://www.nyc.gov/html/tlc/html/about/trip\_record\_data.shtml}. We clustered the pickup locations into 50 different groups with K-means clustering, and placed a station at the road intersection closest to each cluster centroid. We fitted an origin-destination model with exponential distributions to describe the customer trip demands between the stations.
	Road capacities were reduced
	to ensure that the model reaches maximum utilization in some road links; in the real world, a qualitatively similar reduction in road capacity is caused by traffic exogenous to the taxi system.

	We considered two scenarios: the ``baseline'' scenario where traffic constraints on each road link are based on expectation, i.e., on average the number of vehicles on a road link is below its nominal capacity; and the ``conservative'' scenario where the constraints are based on the asymptotic probability of exceeding the nominal capacity (specifically, the asymptotic probability of exceeding the nominal capacity is constrained to be lower than 10\%). Figure \ref{fig:manhattan_pics} shows the station locations, the road network, and the resulting traffic flow, and Figure \ref{fig:sub1} shows our results. 

	We see from Figure \ref{fig:manhattan_av} that, as intended, the station availabilities are balanced and approach one as the fleet size increases. However, Figure \ref{fig:manhattan_veh} shows that there is a trade off between availability and vehicle utilization. For example, for a fleet size of 4,000 vehicles, half of the vehicles are expected to be waiting at the stations. In contrast, a fleet of 2,400 vehicles results in availability of 91\% and only 516 vehicles are expected to be at the stations. Not shown in the figures, 34\% of the trips are for rebalancing purposes; in contrast, only about 18\% of the traveling vehicles are rebalancing. This shows that rebalancing trips are significantly shorter than passenger trips, in line with the goal of minimizing the number of empty vehicles on the road and thus road congestion.

	Although Figures \ref{fig:manhattan_av} and \ref{fig:manhattan_veh} show only the results for the baseline case, for the conservative scenario the difference in availabilities is less than 0.1\%, and the difference in the total number of vehicles on the road is less than 7, regardless of fleet size. However, road utilization is significantly different in the two scenarios we considered. In Figure \ref{fig:manhattan_ut}, we see that, as the fleet size increases, the likelihood of exceeding the nominal capacity
	approaches 50\%. In contrast, in the conservative scenario, the probability of exceeding the capacity is never more than 10\% --by design-- regardless of fleet size. 
	
	Lastly, we evaluated how much the assumption of load-independent travel times deviates from the more realistic case where travel time depends on traffic. Assuming asymptotic conditions (i.e., the number of vehicles on each road follows a Poisson distribution), we computed for both scenarios the expected travel time between each origin-destination pair by using the Bureau of Public Roads (BPR) delay model \ifarxiv(described in 
	the Appendix)\else \cite{BPR:64}\fi, and estimated the difference with respect to the load-independent travel time used in this paper. The results, depicted in Figure \ref{fig:manhattan_traveltimes}, show that the maximum difference for the baseline and conservative scenarios are an increase of around 8\% and 4\%, respectively, and the difference tends to be smaller for higher trip times. Thus, for this specific case study, our assumption is relatively well founded.
	\vspace{-5.5mm}
	\begin{figure}[H]
		\centering
		\ifarxiv
		\sidesubfloat[]{\includegraphics[width=.45\textwidth]{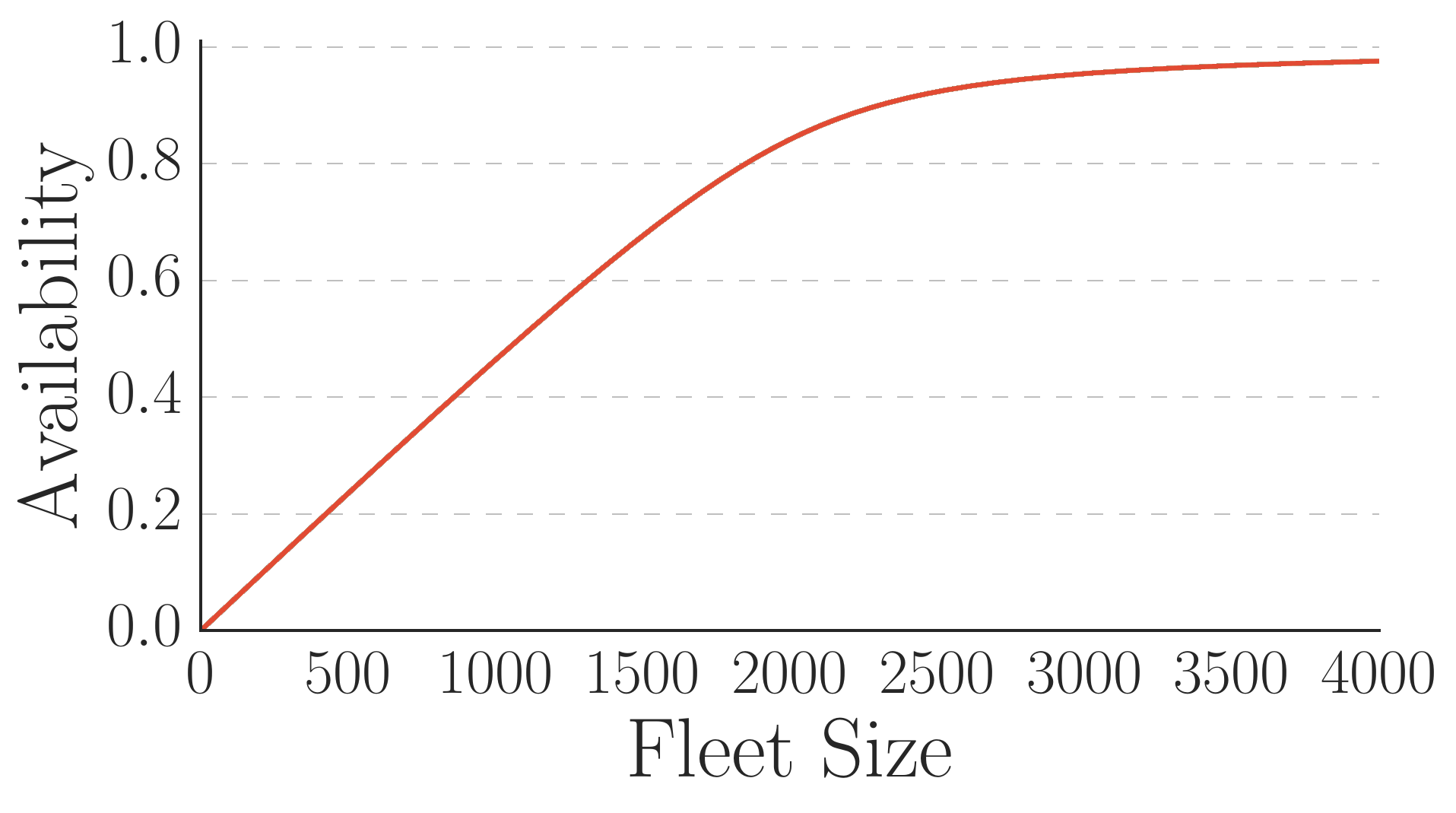}\label{fig:manhattan_av}}
		\sidesubfloat[]{\includegraphics[width=.45\textwidth]{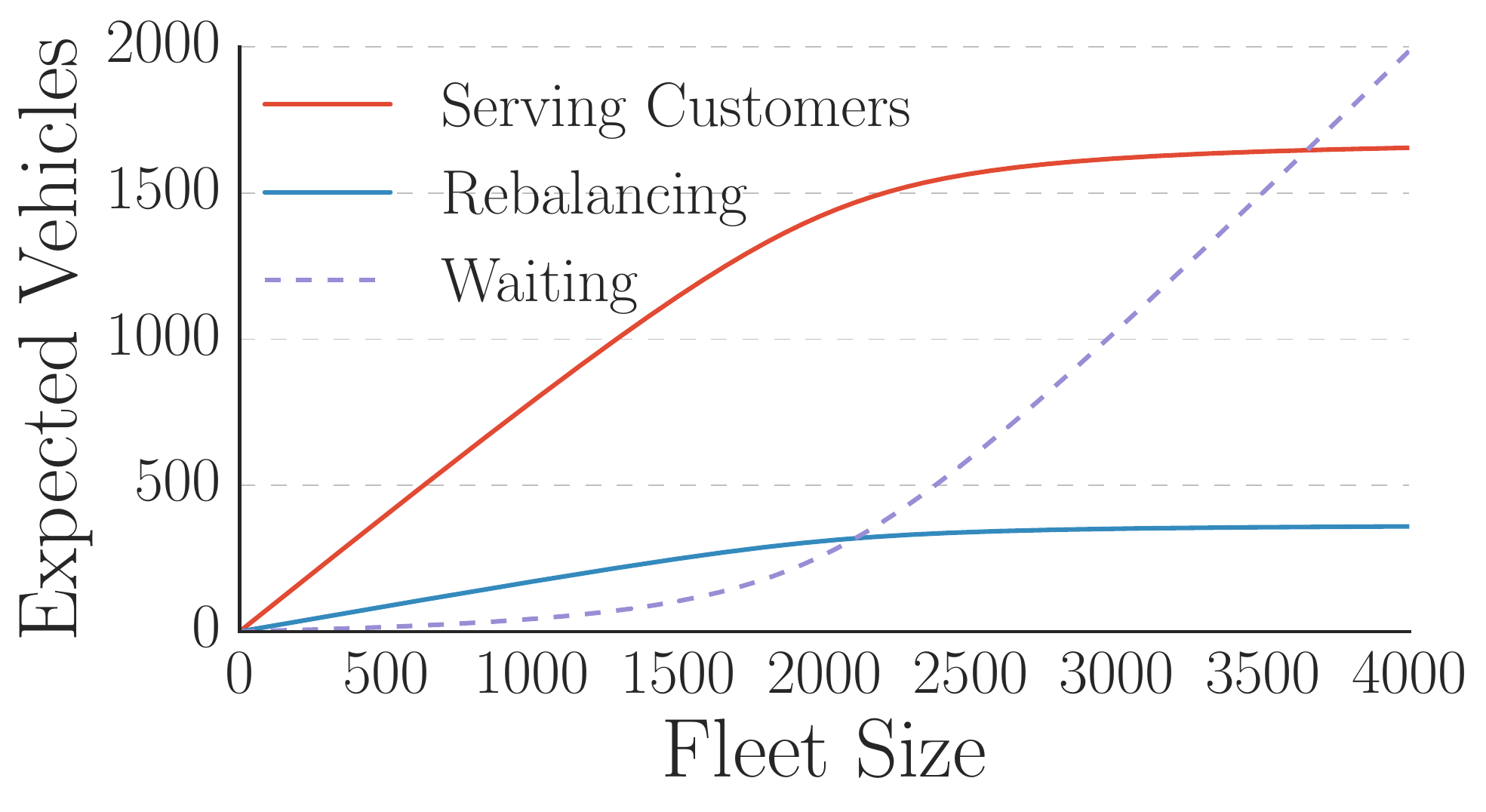}\label{fig:manhattan_veh}}\\
		\sidesubfloat[]{\includegraphics[width=.45\textwidth]{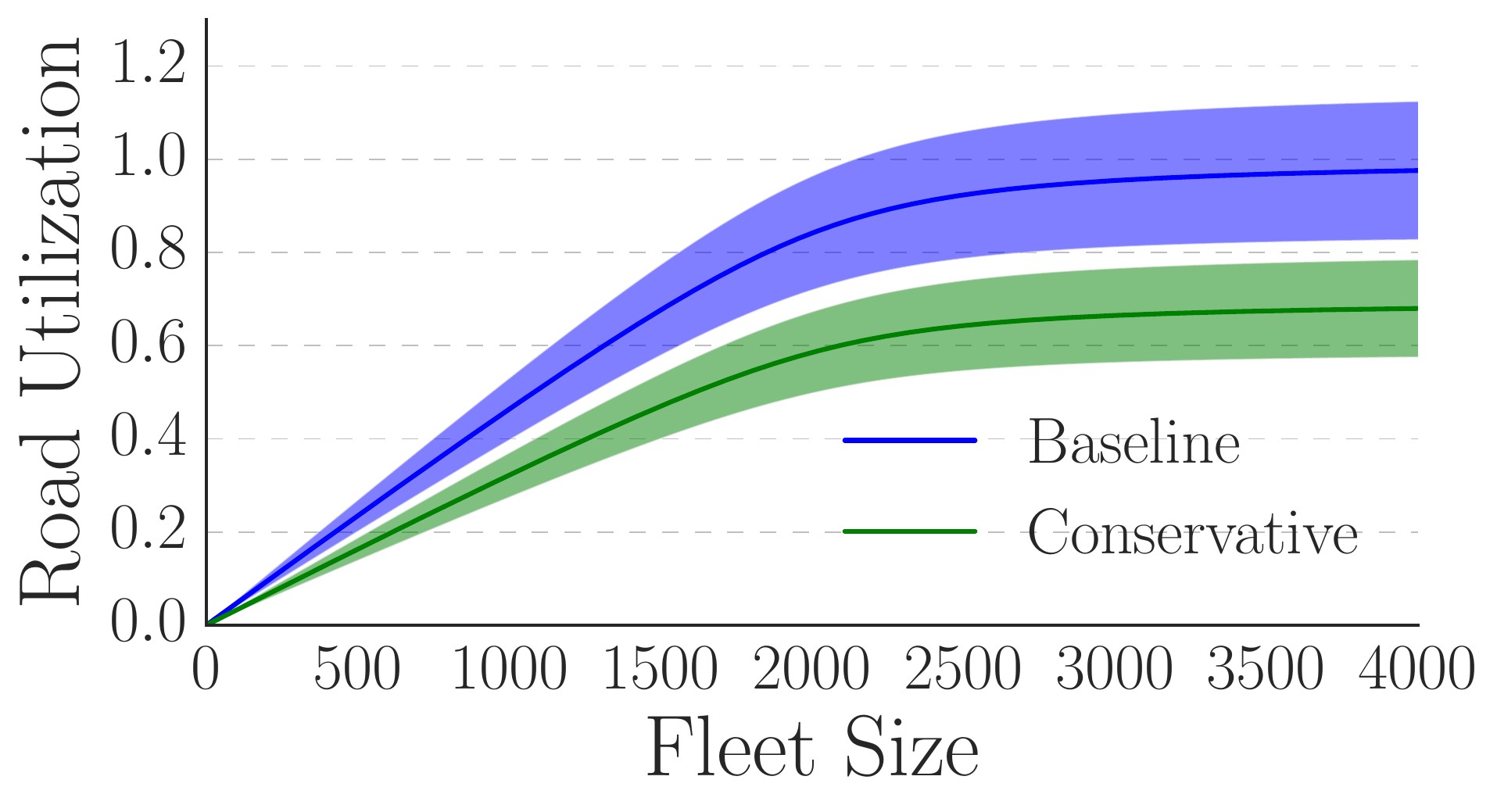}\label{fig:manhattan_ut}}
		\sidesubfloat[]{\includegraphics[width=.45\textwidth]{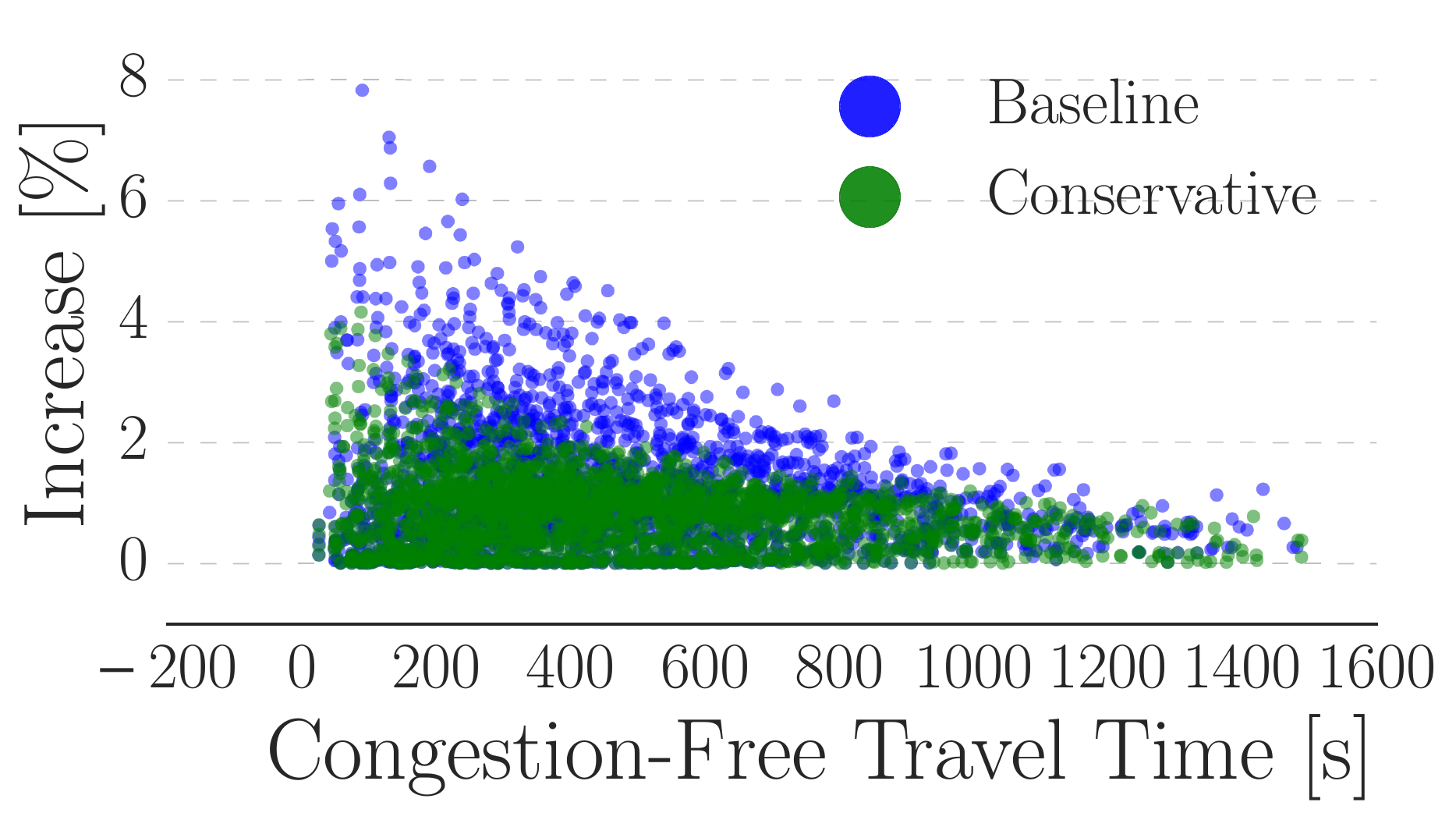}\label{fig:manhattan_traveltimes}}\\		
		\else
		\sidesubfloat[]{\includegraphics[width=.365\textwidth]{manhattan_availability.pdf}\label{fig:manhattan_av}}
		\sidesubfloat[]{\includegraphics[width=.365\textwidth]{manhattan_vehicles.pdf}\label{fig:manhattan_veh}}\\
		\sidesubfloat[]{\includegraphics[width=.365\textwidth]{manhattan_utilization_std.pdf}\label{fig:manhattan_ut}}
		\sidesubfloat[]{\includegraphics[width=.365\textwidth]{manhattan_traveltimes_cfr.pdf}\label{fig:manhattan_traveltimes}}\\
		\fi
		\caption{\protect\subref{fig:manhattan_av} Station availabilities as a function of fleet size for the baseline case.
			\protect\subref{fig:manhattan_veh} Expected number of vehicles by usage as a function of fleet size for the baseline case.
			\protect\subref{fig:manhattan_ut} Utilization as a function of fleet size for the most utilized road. The colored band denotes $\pm 1$ standard deviation from the mean.
			\protect\subref{fig:manhattan_traveltimes} Increase in expected travel time for each O-D pair when considering the BPR delay model.
			}
		\label{fig:sub1}
	\end{figure}
	
	\vspace{-10mm}
	\section{Conclusions}\label{conclusions}
	\vspace{-5mm}
	In this paper, we presented a novel queuing theoretic framework for modeling AMoD systems within capacitated road networks.
	We showed that, for the routing and rebalancing problem, the stochastic model we propose asymptotically recovers existing models based on the network flow approximation. 
	The model enables the analysis and control of the probabilistic distribution of the vehicles, and not only of its expectation: in particular (i) it enables the computation of higher moments of the vehicle distribution on road links and at stations and (ii) it allows to establish an arbitrary bound on the asymptotic probability of exceeding the capacity of individual road links.
	
	The flexibility of the model presented will be further exploited in future work. First, we would like to incorporate a more accurate congestion model, using load-dependent IS queues as roads, in order to study heavily congested scenarios. Second, we currently consider the system in isolation from other transportation modes, whereas, in reality, customer demand depends on the perceived quality of the different transportation alternatives. Future research will explore the effect of AMoD systems on customer behavior and how to optimally integrate fleets of self-driving vehicles with existing public transit. Third, we would like to examine scenarios where the vehicle fleet is electric-powered and explore the relationship between the constraints imposed by battery charging and the electric grid. Fourth, the current model assumes that each customer travels alone: future research will address the problem of \emph{ride-sharing}, where multiple customers may share the same vehicle. Lastly, the control policy proposed in this paper is open-loop, and thus sensitive to modeling errors (e.g., incorrect estimation of customer demand). Future research will characterize the stability, persistent feasibility and performance of \emph{closed-loop} model predictive control schemes based on a receding-horizon implementation of the controller presented in this paper.

	\begin{acknowledgement}
	The authors would like to thank the National Science Foundation for funding this work via the NSF CAREER award.
	\end{acknowledgement}
	\bibliographystyle{splncs03} 
	\bibliography{../../../bib/main_legacy.bib}

	\ifarxiv
	\section*{Appendix}
	\paragraph{Proof of Lemma \ref{le:folding}}
		Using the routing matrix specified in Equation \eqref{eq:routing} we can rewrite the class throughputs (\ref{eq:throughput}) as
		\begin{equation}\label{eq:simple_pi_k}
		\small
		\begin{split}
		\pi_{i,k} & = \sum_{k'= 1}^{K} \sum_{j = 1}^{N} \pi_{j,k'} p_{j,k';i,k} = \sum_{k'\in \mathcal{D}_i} \sum_{j \in N_{in}(j)} \pi_{j,k'} p_{j,k';i,k}, \\
		& = \sum_{k'\in \mathcal{D}_i} \sum_{j \in N_{in}(j)} \pi_{j,k'} \widetilde{p}_i^{(k)} = \widetilde{p}_i^{(k)} \sum_{k'\in \mathcal{D}_i} \sum_{j \in N_{in}(j)} \pi_{j,k'}, \\
		\end{split}
		\end{equation}
		where the second equality acknowledges the fact that only queues feeding into $i$ and vehicles whose class destination is $i$ will be routed to $i$, and the third and fourth equalities take advantage of the fact that the probability of switching into class $k$ at queue $i$ is the same regardless of the original class $k'$. This allows to rewrite the total relative throughput
		\begin{equation}\label{eq:simple_pi}
		\small
		\begin{split}
		\pi_i & = \sum_{k = 1}^{K} \widetilde{p}_i^{(k)} \sum_{k'\in \mathcal{D}_i} \sum_{j \in N_{in}(j)} \pi_{j,k'} = \sum_{k'\in \mathcal{D}_i} \sum_{j \in N_{in}(j)} \pi_{j,k'},
		\end{split}
		\end{equation}
		since $\sum_{k = 1}^{K} \widetilde{p}_i^{(k)} = 1$. As a consequence of (\ref{eq:simple_pi}) and (\ref{eq:simple_pi_k}), we can relate the class relative throughputs to the total relative throughputs
		
		\begin{equation}\label{eq:folded_traffic}
		\small
		\pi_{i,k} = \widetilde{p}_i^{(k)} \pi_i.
		\end{equation}
		
		Now, assume the relative throughputs belong to a feasible solution to OSCARR. We proceed to reduce (\ref{eq:routing_constraint}) by using the routing matrix:
		\begin{equation}
		\small
		\begin{split}
		\pi_{s^{(k)}, k} & = \sum_{k' \in \mathcal{K}} \sum_{j \in \mathcal{N}} \pi_{j,k'} p_{j,k;t^{(k)},k'} = \sum_{k' \in \mathcal{K}} \widetilde{p}_{t^{(k)}}^{(k')} \sum_{j \in \mathcal{N}_{in}(t^{(k)})} \pi_{j,k'} = \sum_{j \in \mathcal{N}_{in}(t^{(k)})} \pi_{j,k'},
		\end{split}
		\end{equation}
		by inserting this into (\ref{eq:simple_pi}) and applying (\ref{eq:folded_traffic}) we obtain
		\begin{equation}
		\small
		\begin{split}\label{eq:folded_traffic_origins}
		\pi_ i & = \sum_{k \in \mathcal{D}_i} \pi_{s^{(k)}, k} = \sum_{k \in \mathcal{D}_i} \widetilde{p}_{s^{(k)}}^{(k)} \pi_{s^{(k)}}.
		\end{split}
		\end{equation}
\paragraph{Proof of Lemma \ref{le:lambdas_constr}}

	The proof of Lemma \ref{le:lambdas_constr} is very similar to Theorem 4.3 in \cite{RZ-MP:15_MODa}.

		Consider the case where (\ref{eq:lambdas_constr}) holds. We can write (\ref{eq:folded_traffic_origins_lemma}) in terms of the relative utilization rate:
		\begin{equation}\label{eq:folded_traffic_origins_utilization}
		\small
		\left(\sum_{k \in \mathcal{D}_i} \lambda_{s^{(k)}}^{(k)}\right) \gamma_i = \sum_{k \in \mathcal{D}_i} \gamma_{s^{(k)}} \lambda_{s^{(k)}}^{(k)}\,.
		\end{equation}
		
		Now, by grouping customer and rebalancing classes by origin-destination pairs, we define $\varphi$ as
		\begin{equation}
		\small
		\varphi_{ij} = \lambda_j^{(q)} + \lambda_j^{(r)},
		\end{equation}
		such that $s^{(q)}=s^{(r)}=j$ and $t^{(q)}=t^{(r)}=i$. Additionally, let $\zeta_{ij} = \varphi_{ij}/\sum_j \varphi_{ij}$. We note that there are no classes for which $s^{(k)}=t^{(k)}$, so we set $\varphi_{ii}=\zeta_{ii}=0$. Under this definition, the variables $\{\zeta_{ij}\}_{ij}$ represent an irreducible Markov chain. Thus, Equation (\ref{eq:folded_traffic_origins_utilization}) can be rewritten as $\gamma_i = \sum_j \gamma_j \zeta_{ij}$ or more compactly $Z \gamma = \gamma$, where the rows of $Z$ are $[\zeta_{i1},\zeta_{i2},...,\zeta_{iS}]$, with $S = |\mathcal{S}|, $ $i = 1,..., S$, and $\gamma = (\gamma_1, ..., \gamma_s)$. This result is identical to \cite{RZ-MP:15_MODa}. Since $Z$ is an irreducible, row stochastic Markov chain, by the Perron-Frobenius theorem the unique solution is given by $\gamma = (1, ... ,1)^T$. Thus, $\gamma_i = \gamma_j$ for all $i$.
		
		On the other hand, we consider again Equation (\ref{eq:folded_traffic_origins_utilization}). If the network $\mathcal{Z}$ is a solution to problem (\ref{eq:cbrrp}), then for all $i,j \,\,\gamma_i = \gamma_j = \gamma$, and (\ref{eq:folded_traffic_origins_utilization}) becomes
		
		\begin{equation}
		\small
		\gamma \widetilde{\lambda}_i = \gamma \sum_{k \in \mathcal{D}_i} \lambda_{s^{(k)}}^{(k)}, \quad
		\widetilde{\lambda}_i = \sum_{k \in \mathcal{D}_i} \lambda_{s^{(k)}}^{(k)}.
		\end{equation}

	\paragraph{Bureau of Public Roads delay model} The Bureau of Public Roads (BPR) delay model is a commonly used equation for relating traffic to travel time \cite{BPR:64}. Under this model, the travel time on a road link is given by
	\begin{equation}
	T_i' = T_i \left(1 + \delta \left(\frac{x_i}{C_i}\right)^\beta\right),
	\end{equation}
	where $T_i'$ is the real mean travel time, $T_i$ is the free flow travel time, $x_i$is the number of vehicles on the road, $C_i$ the nominal capacity of the road, and $\delta$ and $\beta$ are function parameters usually set to 0.15 and 3, respectively.
	\fi
\end{document}